\theoremstyle{plain}
\newtheorem{theorem}{Theorem}
\newtheorem{lemma}{Lemma}
\newtheorem{proposition}{Proposition}
\newtheorem{corollary}{Corollary}
\theoremstyle{definition}
\newtheorem{definition}{Definition}
\theoremstyle{remark}
\newtheorem{remark}{Remark}
\begin{document}

\title{Stochastic Indexing Primitives for Non-Deterministic Molecular Archives}

\author[1]{Faruk Alpay}
\author[1]{Levent Sarioglu}

\affil[1]{Bahcesehir University, Department of Computer Engineering, Istanbul, Turkey}

\affil[1]{\href{mailto:faruk.alpay@bahcesehir.edu.tr}{faruk.alpay@bahcesehir.edu.tr}, \href{mailto:levent.sarioglu@bahcesehir.edu.tr}{levent.sarioglu@bahcesehir.edu.tr}}

\date{}

\maketitle

\begin{abstract}
Random access remains a central bottleneck in DNA-based data storage: existing systems typically retrieve records by PCR enrichment or other multi-step biochemical procedures, which do not naturally support fast, massively parallel, content-addressable queries.

This paper introduces the \emph{Holographic Bloom Filter} (HBF), a new probabilistic indexing primitive that encodes key--pointer associations as a single high-dimensional memory vector. HBF binds a key vector and a value (pointer) vector via circular convolution and superposes bindings across all stored records; queries decode by correlating the memory with the query key and selecting the best-matching value under a margin-based decision rule.

We provide algorithms for construction and decoding, and develop a detailed probabilistic analysis under explicit noise models (memory corruption and query/key mismatches). The analysis yields concentration bounds for match and non-match score distributions, explicit threshold/margin settings for a top-$K$ decoder, and exponential error decay in the vector dimension under standard randomness assumptions.

Overall, HBF offers a concrete, analyzable alternative to pointer-chasing molecular data structures, enabling one-shot associative retrieval while explicitly quantifying the trade-offs among dimensionality, dataset size, and noise.
\end{abstract}

\paragraph{Keywords.} DNA data storage; random access; Bloom filter; vector symbolic architecture; hyperdimensional computing; skip graphs; error correction.

\section{Introduction}
With the rapid advancement of synthesis and sequencing technologies, DNA macromolecules have emerged as a promising medium for long-term digital information storage, primarily owing to their exceptional density and durability \cite{church2012,goldman2013}. However, beyond encoding and error correction, scalable \emph{indexing and random access} remain open algorithmic problems at the molecular layer.

\paragraph{Problem statement.}
We study a molecular archive containing $n$ records with unique keys $x_i$ and associated pointers/labels $y_i\in\mathcal{Y}$. The goal is to design an \emph{index representation} $M$ and a \emph{decoder} $\mathsf{Dec}$ such that, for a noisy query key $\tilde{x}$,
\[
\hat{y} \;=\; \mathsf{Dec}(\tilde{x}; M)\in\mathcal{Y}\cup\{\bot\}
\]
returns the correct $y_i$ with high probability, while requiring a constant number of logical ``rounds'' of molecular interaction. Formally, we aim to maximize
\[
\Pr\big\{\mathsf{Dec}(\tilde{x}_i;M)=y_i\big\} \quad\text{subject to a one-shot query model.}
\]

\paragraph{Contributions.}
This paper proposes the \emph{Holographic Bloom Filter (HBF)} and provides (i) explicit construction and decoding algorithms (including a top-$K$ margin decoder), (ii) a signal--noise decomposition for the decoder scores, (iii) concentration and extreme-value analyses that yield closed-form threshold/margin settings, and (iv) correctness guarantees with error probabilities that decay exponentially in the vector dimension under explicit noise models.

A typical DNA storage process involves synthesizing artificial DNA strands to encode user information, storing them in a container, and subsequently sequencing them to translate the DNA back into digital sequences. These features have spurred a series of landmark demonstrations of DNA storage, encoding digital text, images, and even video into synthetic DNA and successfully recovering them via high-throughput sequencing \cite{church2012,goldman2013,erlich2017,organick2018}. However, due to technical constraints, early DNA storage architectures required reading out the entire DNA library to retrieve any specific file or data object, as there was no mechanism for direct access \cite{church2012,goldman2013}. Despite the immense potential of DNA storage, current sequencing workflows remain costly and slow compared to conventional digital storage systems \cite{shomorony2022}. This inefficiency is partly due to the need to sequence massive numbers of molecules at high coverage depth, highlighting the urgent need for more efficient retrieval techniques.

To address this challenge, recent efforts have emphasized developing \emph{random access} methods in DNA data storage, which allow specific files or subsets of data to be read without sequencing everything \cite{organick2018,imburgia2025}. One common approach is to use conventional PCR-based enrichment, wherein a unique primer corresponding to the desired file's address sequence is added to the DNA pool to selectively amplify only those strands, which are then sequenced \cite{organick2018}. While straightforward in concept, this strategy faces significant limitations. The end-to-end protocol is time-consuming and requires careful design of primers to avoid molecular crosstalk with data payloads. Furthermore, multiplexed retrieval (targeting multiple files in a single reaction) is challenging since large pools of DNA make it difficult to prevent primers from interfering with each other \cite{organick2018}. These issues worsen as database size scales up. Some designs have resorted to physically partitioning data---e.g., the DENSE system which uses nested addressing in separate compartments---or encapsulating each file in microdroplets or beads to achieve separation of file subsets. One study demonstrated fluorescent-assisted sorting (FAS) of such DNA capsules to enable boolean queries on file tags, and another method uses controlled encapsulation with magnetic beads to selectively retrieve files. These methods achieve repeated random accesses by mitigating molecular crosstalk, but they are complex to implement and impose new limitations on scalability (for instance, FAS is limited by the number of distinct fluorophores available, and encapsulation-based methods incur significant overhead in preparation time and storage density).

Given these challenges, a random access technique that provides a simple protocol with parallel (multiplexed) access capabilities is highly desirable. Furthermore, modern data retrieval requirements often extend beyond exact file lookup; one may wish to perform \emph{similarity search} or associative queries (e.g., finding data items related to a given keyword or image). Recently, a proof-of-concept content-based image search was demonstrated in a DNA storage system by leveraging DNA hybridization to cluster similar data items, though the process required a long and delicate thermal program to execute. 

The CRISPR-Cas9 system, known for its sequence-specific, programmable targeting of DNA, presents an interesting approach for molecular data retrieval. Imburgia \emph{et al.} \cite{imburgia2025} introduced a method in which Cas9 enzymes, guided by customizable RNA sequences, selectively cleave and thereby enrich desired DNA records from a large pool. This Cas9-based random access (C9RA) approach was validated on a pool of 25 files (comprising 1.6 million distinct DNA strands), demonstrating that multiple targets can be extracted in parallel using distinct guide RNAs. Moreover, they developed a \emph{similarity search} mode by exploiting Cas9’s tolerance for mismatches: a guide RNA encoding a query (e.g., an image’s feature vector) will bind not only exact matches but also sequences with small differences, thereby retrieving a set of DNA records with related content. While powerful, such Cas9-driven retrieval still requires complex biological workflows (protein delivery, reaction optimization) and currently achieves readout latencies on the order of hours rather than seconds.

In light of these developments, we seek to develop a \textbf{theoretically grounded, computer science-centric solution} for indexing data within a DNA archive. We aim to enable \emph{in situ} queries---queries answered within the molecular substrate---without exhaustive sequencing or extensive lab procedures for each query. We identify three key requirements for such an indexing primitive: (1) It must support \textbf{direct content-addressable queries} (analogous to $O(1)$ key-value lookups) to avoid sequentially scanning huge DNA pools. (2) It should be \textbf{robust to errors} at multiple levels: individual nucleotides may be incorrectly synthesized or read, and query probes may bind imperfectly, so the index must gracefully handle approximate matches and noise. (3) It must operate within the \textbf{massively parallel, biochemical environment} of DNA, leveraging the fact that a vast number of molecules can interact simultaneously while minimizing the number of sequential wet-lab steps (each of which adds significant time and energy costs).

To meet these requirements, we propose the \emph{Holographic Bloom Filter (HBF)}, a novel indexing data structure that marries the classic Bloom filter’s probabilistic membership testing \cite{bloom1970} with hyperdimensional computing techniques \cite{kanerva2009,plate1995} to enable approximate, associative retrieval. In essence, the HBF treats the entire DNA pool as a high-dimensional "memory vector" that can be queried in parallel by a suitably encoded molecular probe. The term \emph{holographic} refers to the fact that each data item is not stored in a single physical location (as a pointer in a tree would be), but rather as a distributed pattern spread across many molecules or nucleotides of a composite representation. This concept is inspired by \emph{Vector Symbolic Architectures} (VSA) \cite{kanerva2009,plate1995}, which use high-dimensional vectors to represent items and employ operations like circular convolution to bind and associate those vectors. By applying VSA operations to DNA-encoded vectors, our HBF creates a dense index that can be queried via similarity operations, rather than by following a chain of pointers.

Our approach is \textbf{algorithmic and theoretical}---we abstract the DNA storage medium as a noisy memory where certain operations (like parallel associative search via hybridization) are available, but we do not assume a specific laboratory protocol. This allows us to focus on designing and analyzing the data structure itself. The main contributions of this work are as follows:
\begin{itemize}
    \item We introduce the \textbf{Holographic Bloom Filter (HBF)}, a probabilistic indexing structure for DNA data storage that enables direct random access and associative queries. The HBF uses circular convolution on high-dimensional vectors (derived from DNA $k$-mer embeddings) to simulate multiple independent hash functions, encoding key-pointer associations in a single composite memory.
    \item We provide a rigorous \textbf{theoretical analysis} of the HBF’s performance and noise robustness. We derive formulas for the false positive rate of HBF in the presence of random noise, and we prove two theorems that bound the probability of spurious retrievals (false positives) when query vectors differ from stored keys by a given Hamming distance. We also analyze trade-offs between vector dimensionality, number of stored items, and error rates, drawing parallels to classical Bloom filter theory.
    \item We compare HBF to a \textbf{Stochastic Skip-Graph} baseline (a skip-list-like probabilistic pointer structure adapted for DNA archives). In a skip-graph \cite{aspnes2003}, each lookup requires $O(\log n)$ pointer traversals; in a molecular context, this would entail iterative rounds of chemical reactions, compounding error probabilities and latency at each step. By contrast, HBF executes a \emph{single-step} query in which all potential matches are evaluated in parallel, effectively achieving $O(1)$ lookup time on average. We argue that this yields significantly lower retrieval latency and energy consumption for large-scale molecular archives.
    \item We present \textbf{pseudocode algorithms} for constructing and querying the HBF, and discuss how parameters can be tuned for different error regimes. We also include conceptual figures illustrating the HBF’s architecture and a complexity comparison against baseline approaches.
\end{itemize}

The rest of the paper is organized as follows. Section~\ref{sec:background} provides background on DNA storage, Bloom filters, skip graphs, and VSA relevant to our approach. Section~\ref{sec:hbf-design} describes the HBF design in detail, including how DNA sequences are encoded as vectors and how insertions and queries are performed. Section~\ref{sec:analysis} presents a theoretical analysis of the HBF’s reliability, with formal theorems on false positive and false negative probabilities under realistic error models. Section~\ref{sec:experiments} discusses performance considerations and compares our scheme to a skip-graph baseline. Finally, Section~\ref{sec:conclusion} concludes with a summary and potential directions for future research.

\section{Model and Related Work}
\label{sec:background}
\subsection{Formal Model (Indexing as Noisy Associative Decoding)}
We abstract the molecular substrate as a noisy channel acting on an index state $M\in\mathbb{R}^d$ and a query probe derived from a key.
Let $\mathbf{k}_x,\mathbf{v}_y\in\{\pm1\}^d$ denote the (pseudo)random embeddings of keys and values.
The HBF memory is a superposition
\[
M \;=\; \sum_{i=1}^n \alpha_i\,(\mathbf{k}_{x_i}*\mathbf{v}_{y_i}),
\]
with weights $\alpha_i$ (often $\alpha_i\equiv \rho/\sqrt{n}$).
A noisy stored memory is modeled as
\[
\tilde{M} \;=\; M + W,\qquad W\sim \mathcal{N}(0,\sigma_M^2 I_d)\ \text{(or componentwise flips/perturbations)}.
\]
A noisy query key is modeled by a perturbed embedding
\[
\mathbf{k}_{\tilde{x}} \;=\; \mathbf{k}_x + E,\qquad E\sim \mathcal{N}(0,\sigma_Q^2 I_d)\ \text{(or Hamming-distance $H$ flips)}.
\]
Decoding computes $\mathbf{z}=\tilde{M}\circledast \mathbf{k}_{\tilde{x}}$ and selects
\[
\hat{y} \;\in\; \arg\max_{y\in\mathcal{Y}}\ \langle \mathbf{z},\mathbf{v}_y\rangle
\]
subject to threshold/margin constraints. This is a high-dimensional hypothesis test with structured interference, and the subsequent sections focus on deriving explicit, analyzable thresholds.

\subsection{DNA Data Storage and Random Access (Context)}
In DNA storage systems, data is encoded as synthetic DNA oligonucleotides (short sequences of bases, typically 100--200 bp) that collectively represent the digital information. To recover the data, the DNA pool is sequenced and decoded back into binary. Early demonstrations by Church \emph{et al.} and Goldman \emph{et al.} simply sequenced the entire pool and then reconstructed the stored files \cite{church2012,goldman2013}. This approach does not scale for large archives, as it requires enormous sequencing effort even to retrieve a single file. To enable random access, later designs introduced the concept of attaching a unique \emph{address sequence} or barcode to each data strand. By including a short key sequence for each file, one can in principle retrieve that file by an operation that targets its key.

Organick \emph{et al.} (2018) implemented this idea at scale by designing a library of orthogonal primers for 35 distinct files (over 200~MB of data in total) and demonstrating that each file could be individually retrieved via PCR enrichment \cite{organick2018}. This was a significant step forward, proving that random access is feasible in DNA storage. However, as the number of files $n$ grows into the thousands or millions, designing and managing unique primers for each becomes impractical. Primers can exhibit non-specific binding or form primer-dimers, and in a complex pool, amplification biases can cause some targets to dominate while others drop out. Indeed, Organick \emph{et al.} reported that careful normalization of primer concentrations was needed to achieve balanced retrieval across files \cite{organick2018}. 

Other groups have sought alternatives to PCR-based selection. The CRISPR approach by Imburgia \emph{et al.} \cite{imburgia2025}, as discussed, bypasses primer design by using a programmable enzyme to directly find target sequences. This method can scale to large libraries (by programming new guide RNAs as needed), but each query still entails a separate biochemical reaction and the involvement of protein machinery.

Several systems-level DNA storage prototypes also highlight the importance of a clean, scalable addressing layer. For example, Tabatabaei Yazdi \emph{et al.} demonstrated a rewritable, random-access DNA storage architecture \cite{yazdi2015}, and Bornholt \emph{et al.} proposed an end-to-end DNA storage system with a focus on practical encode/decode and access workflows \cite{bornholt2016}. These lines of work complement our goal: rather than proposing another biochemical selection protocol, we explore an indexing abstraction that could be layered on top of such systems.

In summary, efficient random access in DNA archives remains challenging. Our work addresses this by proposing a new data structure that leverages the native parallelism of molecular interactions to perform indexing in the solution itself, rather than externally via primers or physical separation.

\subsection{Probabilistic Data Structures}
A \textbf{Bloom filter} \cite{bloom1970} is a classic probabilistic data structure for set membership queries. It consists of an array of $m$ bits initially set to 0, along with $k$ independent hash functions. To insert an item (key) into a Bloom filter, one computes its $k$ hash values (each in $0 \ldots m-1$) and sets those array positions to 1. To query membership of a key, one checks its $k$ hashed positions: if any position is 0, the key is definitely not in the set; if all are 1, the answer is "possibly in set" (with some false positive probability). By choosing $m$ and $k$ appropriately for the number of stored items $n$, one can make the false positive probability $\epsilon$ arbitrarily small. Importantly, Bloom filters have no false negatives (assuming no bit flips), but false positives occur when a non-member key happens to have all its hashed positions set to 1 by other insertions.

Bloom filters, however, only support membership testing (and union/intersection operations)---they do not retrieve associated values, nor are they designed to handle errors in the key or data (if the query key is even slightly different, e.g. a mutated sequence, a standard Bloom filter will likely return false). Our HBF will extend the Bloom filter concept to an \emph{associative} setting, returning pointers (or file identifiers) rather than a yes/no, and will be robust to partial mismatches.

On the other hand, pointer-based structures like balanced trees or skip lists provide deterministic lookup but map poorly to DNA. A \textbf{Skip List} \cite{pugh1990} maintains multiple levels of sorted linked lists to allow binary-search-like traversal in $O(\log n)$ expected time. A distributed variant, the \textbf{Skip Graph} \cite{aspnes2003}, provides similar $O(\log n)$ search functionality in peer-to-peer networks and is resilient to node failures. Conceptually, one could implement a skip list/graph in DNA by encoding pointers as complementary DNA sequences that link nodes. However, following pointers in a chemical solution is inherently sequential: each pointer hop might require a separate hybridization step and possibly a biochemical separation to remove unbound molecules before the next hop. If each hop succeeds with probability $p$ and takes time $T$, then $\ell$ hops succeed with probability $p^\ell$ and take time $\ell T$. For moderate $\ell = O(\log n)$, this can become extremely slow and unreliable as discussed earlier. 

In essence, structures that rely on following a chain of pointers (like skip lists or graphs) incur a serious performance penalty in molecular settings: unlike electronic memory, where pointer chasing is fast, in DNA each pointer resolution is a slow, probabilistic event. This motivates our focus on pointer-free, parallel query mechanisms like those embodied by HBF.

\subsection{Vector Symbolic Architectures (VSA)}
The key enabling technique for HBF is the use of \textbf{high-dimensional random vectors} to encode information, an idea that originates from \emph{Vector Symbolic Architectures} (also known as hyperdimensional computing) \cite{kanerva2009}. In a VSA, items are represented as vectors in $\{\pm 1\}^d$ for large $d$ (on the order of 10,000). Two crucial operations are defined: \emph{superposition} (addition) for combining vectors, and \emph{binding} for linking vectors together. Binding has an inverse operation that allows one to retrieve a component of a composite from the whole.

In particular, we focus on the VSA model of \emph{Holographic Reduced Representations (HRR)} introduced by Plate \cite{plate1995}, where binding is implemented via \textbf{circular convolution}. Given two vectors $A, B \in \mathbb{R}^d$, their circular convolution $C = A * B$ is a vector in $\mathbb{R}^d$. There is an inverse operation, circular correlation, such that $C \circledast A \approx B$ and $C \circledast B \approx A$ (with exact equality if the vectors are orthonormal). If $A$ and $B$ are random $\{\pm 1\}$ vectors, $C$ will also appear random, and correlation $C \circledast A$ yields a result that is close to $B$ (with noise that vanishes as $d \to \infty$).

To use an example: suppose we have a key $x$ and a value $y$ that we want to associate. We assign $x$ a random high-dimensional vector $\mathbf{k}_x$ and $y$ a random vector $\mathbf{v}_y$. The bound representation is $\mathbf{u}_{x,y} = \mathbf{k}_x * \mathbf{v}_y$. To retrieve $y$ given $x$, we compute $\mathbf{u}_{x,y} \circledast \mathbf{k}_x$, which should return (an approximation of) $\mathbf{v}_y$. If we have many pairs stored as a superposition $M = \sum_i \mathbf{k}_{x_i} * \mathbf{v}_{y_i}$, then given a query $x_q$, we compute $\mathbf{z} = M \circledast \mathbf{k}_{x_q} = \sum_i (\mathbf{k}_{x_i} * \mathbf{v}_{y_i}) \circledast \mathbf{k}_{x_q}$. If $x_q$ equals one of the stored keys, say $x_j$, then the term $i=j$ yields $\mathbf{v}_{y_j}$, while terms $i \neq j$ contribute roughly random noise (since $\mathbf{k}_{x_i} \circledast \mathbf{k}_{x_q}$ is small for $i \neq j$). Thus $\mathbf{z}$ will have a strong component along $\mathbf{v}_{y_j}$, allowing us to identify $y_j$. If $x_q$ is not in the set, $\mathbf{z}$ will be mostly noise.

This associative memory property of VSAs is what HBF leverages. We treat key DNA sequences as keys in a high-dimensional space and data location pointers as values, binding them in a superposed memory. One can view this as a high-dimensional, noise-tolerant analog of a hash table: whereas a hash table stores an index of keys to values explicitly (and requires exact matching), our HBF stores a smeared-out encoding of all key-value pairs and retrieves by approximate correlation, which inherently tolerates some noise in the input.

\subsection{Notation and Mathematical Preliminaries}
\label{sec:prelim}
We formalize the algebra used throughout the paper.

\begin{definition}[Circular convolution and correlation]
Let $a,b\in\mathbb{R}^d$ with indices modulo $d$. The \emph{circular convolution} $c=a*b\in\mathbb{R}^d$ is defined by
\[
  (a*b)[t] \;:=\; \sum_{j=0}^{d-1} a[j]\,b[t-j].
\]
The \emph{circular correlation} $r=a\circledast b\in\mathbb{R}^d$ is defined by
\[
  (a\circledast b)[t] \;:=\; \sum_{j=0}^{d-1} a[j]\,b[t+j].
\]
\end{definition}

\begin{remark}[Fast implementation]
Let $\mathcal{F}$ denote the discrete Fourier transform (DFT) on length-$d$ vectors, and let $\odot$ denote elementwise multiplication. Then
$\mathcal{F}(a*b)=\mathcal{F}(a)\odot\mathcal{F}(b)$ and $\mathcal{F}(a\circledast b)=\overline{\mathcal{F}(a)}\odot\mathcal{F}(b)$.
Hence, on conventional hardware, both operations can be implemented in $O(d\log d)$ time via FFT.
\end{remark}

\begin{definition}[Random sign model]
Unless stated otherwise, key and value vectors are drawn as i.i.d. Rademacher vectors: for $u\in\{\mathbf{k}_x,\mathbf{v}_y\}$, each coordinate satisfies $u[j]\in\{\pm1\}$ with $\Pr(u[j]=1)=\Pr(u[j]=-1)=1/2$, independently across $j$.
\end{definition}

\begin{lemma}[Inner-product concentration]
\label{lem:rademacher_inner_product}
Let $u,v\in\{\pm1\}^d$ be independent. Then $\langle u,v\rangle=\sum_{j=1}^d u[j]v[j]$ satisfies
\[\Pr\{ |\langle u,v\rangle| \ge t\} \le 2\exp\Big(-\frac{t^2}{2d}\Big)\qquad\text{for all }t\ge 0.
\]
\end{lemma}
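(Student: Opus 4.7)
The plan is to reduce the inner product to a sum of independent Rademacher variables and then apply a standard Chernoff/Hoeffding argument on the moment generating function.

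First I would observe that, for each coordinate $j$, the product $X_j := u[j]\,v[j]$ is itself a Rademacher variable. Indeed, conditionally on $u[j]\in\{\pm1\}$, the independent Rademacher $v[j]$ yields $X_j=+1$ and $X_j=-1$ each with probability $1/2$; moreover, since coordinates of $u$ and $v$ are independent across $j$, the variables $X_1,\dots,X_d$ are i.i.d.\ Rademacher. Thus $\langle u,v\rangle=\sum_{j=1}^d X_j$ is a centered sum of $d$ independent $\pm1$ variables.

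Next I would apply the exponential Markov bound. For any $\lambda>0$,
\[
\Pr\{\langle u,v\rangle\ge t\}\;\le\;e^{-\lambda t}\,\mathbb{E}\!\left[e^{\lambda\sum_j X_j}\right]\;=\;e^{-\lambda t}\,\bigl(\cosh\lambda\bigr)^d.
\]
Using the standard inequality $\cosh\lambda\le e^{\lambda^2/2}$ (verified by comparing Taylor coefficients), the right side is at most $\exp\!\bigl(\lambda^2 d/2-\lambda t\bigr)$. Optimizing over $\lambda$ by choosing $\lambda=t/d$ gives $\Pr\{\langle u,v\rangle\ge t\}\le \exp\!\bigl(-t^2/(2d)\bigr)$. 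A symmetric argument, or simply the fact that $-\langle u,v\rangle$ has the same distribution as $\langle u,v\rangle$, handles the lower tail, and a union bound yields the factor of $2$.

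There is no real obstacle here: the only nontrivial observation is the first step, that products of independent Rademachers are themselves independent Rademachers, which collapses the problem to textbook Hoeffding on $\{\pm1\}$ sums. An equivalent route would be to cite Hoeffding's inequality directly for variables bounded in $[-1,1]$, which gives the same constant $t^2/(2d)$ in the exponent. I would favor the explicit MGF derivation above because it keeps the paper self-contained and makes transparent how the constant in the exponent tracks the vector dimension $d$, which is the quantity that drives the exponential error decay claimed in the introduction.
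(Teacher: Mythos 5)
Your proposal is correct and follows essentially the same route as the paper: both reduce $\langle u,v\rangle$ to a sum of i.i.d.\ Rademacher variables $u[j]v[j]$ and then invoke the Hoeffding bound, which you simply prove from scratch via the MGF inequality $\cosh\lambda\le e^{\lambda^2/2}$ rather than citing it. The extra detail is fine but not a different argument.
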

\begin{proof}
The random variables $u[j]v[j]$ are independent Rademacher signs. The bound follows from Hoeffding's inequality for bounded, mean-zero sums.
\end{proof}

In designing HBF, we assume that each possible key and value in our system is assigned a fixed random vector (for keys, we can derive it from the DNA sequence via a hash or mapping of $k$-mers to vectors; for values, which could be file IDs or addresses, we assign random vectors as well). These assignments are known to both the encoding and query processes (like a shared dictionary). The randomness of these vectors ensures that cross-correlations behave like random noise, simplifying analysis.

\section{HBF Design and Algorithms}
\label{sec:hbf-design}
We now formalize the HBF data structure and describe algorithms for insertion and query. 

\subsection{Data Structure Definition}
Let $d$ be the dimension of the vectors used in HBF (e.g., $d = 10{,}000$). We assume two fixed families of random vectors:
\begin{itemize}
    \item For each possible key $x$ (drawn from the universe of DNA addresses), a random vector $\mathbf{k}_x \in \{\pm 1\}^d$.
    \item For each possible value/pointer $y$ (which might represent an address for data retrieval, such as a file identifier or a PCR primer sequence), a random vector $\mathbf{v}_y \in \{\pm 1\}^d$.
\end{itemize}
These can be generated by a pseudorandom function so that we do not explicitly store all vectors.

An HBF representing a set of key-value pairs $\{(x_i, y_i)\}_{i=1}^n$ is defined as:
\[ 
    M \;=\; \sum_{i=1}^n \left(\mathbf{k}_{x_i} * \mathbf{v}_{y_i}\right),
\] 
where $*$ denotes circular convolution. In words, we convolve each key's vector with its value's vector to get a binding, and then we superpose (add) all bindings to form a single memory vector $M$. The memory $M$ is itself a $d$-dimensional vector, which could be stored implicitly by DNA molecules (each position corresponding to a particular partial sequence, etc.) or explicitly in a digital simulation.

\begin{figure}[htbp]
\centering
\begin{tikzpicture}[
  every node/.style={font=\small},
  node distance=9mm,
  box/.style={draw, rounded corners, align=center, minimum height=7mm, inner sep=3pt},
  >={Latex[length=2mm]},
  line/.style={->, shorten >=2pt, shorten <=2pt}
]
  \node[box, minimum width=18mm] (k) {$\mathbf{k}_x$};
  \node[box, minimum width=18mm, right=14mm of k] (v) {$\mathbf{v}_y$};
  \node[box, minimum width=24mm, right=14mm of v] (bind) {$\mathbf{k}_x * \mathbf{v}_y$};
  \node[box, minimum width=34mm, below=12mm of bind] (mem) {Memory $M$\\$\sum_i\,\mathbf{k}_{x_i}*\mathbf{v}_{y_i}$};

  \draw[line] (k.east) -- (v.west);
  \draw[line] (v.east) -- (bind.west);
  \draw[line] (bind.south) -- (mem.north);

  \node[box, minimum width=18mm, below=12mm of k] (kq) {$\mathbf{k}_{x_q}$};
  \node[box, minimum width=34mm, below=12mm of mem] (z) {$\mathbf{z}=M\circledast\mathbf{k}_{x_q}$};
  \node[box, minimum width=16mm, right=14mm of z] (out) {$y^*$};

  \draw[line] (kq.east) -- (z.west);
  \draw[line] (mem.south) -- (z.north);
  \draw[line] (z.east) -- (out.west);
\end{tikzpicture}
\caption{High-level schematic of HBF: key/value vectors are bound by circular convolution and superposed into $M$; queries correlate $M$ with a key vector to recover the associated value.}
\label{fig:hbf_schematic}
\end{figure}
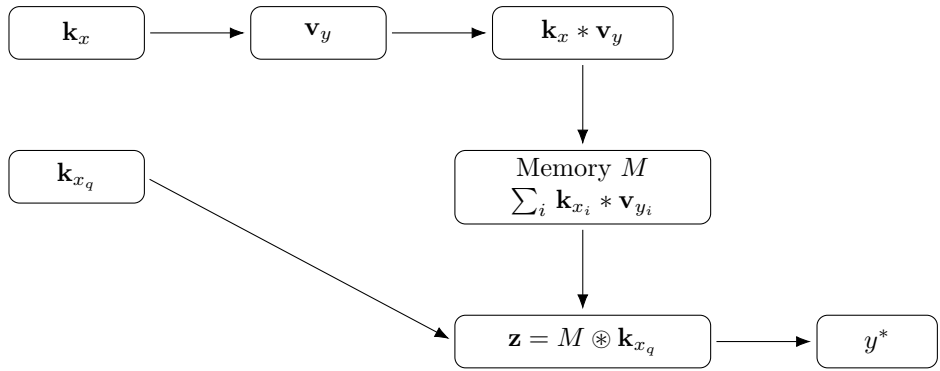

\subsection{Insertion Algorithm}
To insert a new key-value pair $(x, y)$, we need to update the memory vector $M := M + (\mathbf{k}_x * \mathbf{v}_y)$. In a DNA implementation, this could correspond to synthesizing DNA strands that encode the convolution pattern $\mathbf{k}_x * \mathbf{v}_y$ and adding them to the pool.

\subsection{Batch Construction (Practical Encoder View)}
In archival settings, the index is typically constructed offline from a list of records. We therefore also state a batch encoder that emphasizes parameterization and FFT-based binding.

\begin{algorithm}[ht]
\caption{HBF\_Build$(\mathcal{D}, d, \rho)$}
\DontPrintSemicolon
\KwIn{Dataset $\mathcal{D}=\{(x_i,y_i)\}_{i=1}^n$; dimension $d$; optional gain/normalization $\rho>0$.}
\KwOut{Memory vector $M\in\mathbb{R}^d$.}
Initialize $M\leftarrow \mathbf{0}\in\mathbb{R}^d$\;
\For{$i\leftarrow 1$ \KwTo $n$}{
  $\mathbf{k}\leftarrow \mathrm{KeyVec}(x_i)\in\{\pm1\}^d$\;
  $\mathbf{v}\leftarrow \mathrm{ValVec}(y_i)\in\{\pm1\}^d$\;
  $\mathbf{u}\leftarrow \mathrm{IFFT}(\mathrm{FFT}(\mathbf{k})\odot \mathrm{FFT}(\mathbf{v}))$ \tcp*[l]{bind: $\mathbf{k}*\mathbf{v}$}
  $M\leftarrow M + \rho\,\mathbf{u}$\;
}
\Return $M$\;
\end{algorithm}

In a molecular realization, the FFT/IFFT is not performed digitally; rather, the algorithm is a convenient mathematical abstraction of a binding operator that can be physically approximated by compositional sequence operations.

Pseudocode is given below. (Note: in archival scenarios, insertions might be done in batch offline, but we describe an abstract insert for completeness.)

\vspace{1em}
\begin{algorithm}[h]
\caption{HBF\_Insert$(x, y)$}
\DontPrintSemicolon
\KwData{Current memory vector $M \in \mathbb{R}^d$; random key vector $\mathbf{k}_x$; random value vector $\mathbf{v}_y$.}
\KwIn{Key $x$; Value/pointer $y$.}
$\mathbf{u} \leftarrow \mathbf{k}_x * \mathbf{v}_y$ \tcp*[l]{Bind key and value via circular convolution}
$M \leftarrow M + \mathbf{u}$ \tcp*[l]{Superpose this contribution into memory}
\end{algorithm}
\vspace{1em}

\subsection{Query Algorithm}
To query a key $x_q$, we compute the circular correlation of the memory with the key’s vector: $\mathbf{z} = M \circledast \mathbf{k}_{x_q}$. Ideally, $\mathbf{z}$ will equal $\mathbf{v}_y$ (the associated value vector) plus noise contributions from other stored pairs. We then find which value vector $\mathbf{v}_w$ is closest to $\mathbf{z}$ (e.g., by computing dot products). Because all $\mathbf{v}$’s are known, this can be done by parallel matching (conceptually, each possible value “probes” $\mathbf{z}$ for similarity).

To make the decoding rule explicit (and closer to a robust statistical test), we introduce a \emph{top-$K$ margin} decoder: return $y^*$ only if the gap between the best and second-best scores exceeds a margin, and the best score exceeds an absolute threshold.

We incorporate a threshold to decide if the match is strong enough or if the query should be deemed absent. If the top match score is below a set threshold, we return $\bot$ (no result). Otherwise we return the best-matching value.

Pseudocode is given below.

\vspace{1em}
\begin{algorithm}[ht]
\caption{HBF\_Decode$(x_q;\tau,\Delta,K)$}
\DontPrintSemicolon
\KwData{Memory vector $M$; key-vector generator $\mathrm{KeyVec}(\cdot)$; value codebook $\{\mathbf{v}_w\}_{w\in\mathcal{Y}}$.}
\KwIn{Query key $x_q$; absolute threshold $\tau$; margin threshold $\Delta$; top-$K$ list size.}
$\mathbf{k}\leftarrow \mathrm{KeyVec}(x_q)$\;
$\mathbf{z}\leftarrow \mathrm{IFFT}(\overline{\mathrm{FFT}(M)}\odot \mathrm{FFT}(\mathbf{k}))$ \tcp*[l]{correlate: $M\circledast \mathbf{k}$}
\ForEach{$w\in\mathcal{Y}$}{
  $s_w\leftarrow \langle \mathbf{z},\mathbf{v}_w\rangle$ \tcp*[l]{score}
}
Let $w_{(1)},\dots,w_{(K)}$ be the top-$K$ indices by score\;
$s_1\leftarrow s_{w_{(1)}}$, $s_2\leftarrow s_{w_{(2)}}$\;
\If{$s_1\ge \tau$ \textbf{and} $(s_1-s_2)\ge \Delta$}{
  \Return $w_{(1)}$\;
}
\Else{
  \Return $\bot$\;
}
\end{algorithm}
\vspace{1em}

The threshold $\tau$ can be tuned based on the expected noise (for example, set $\tau$ halfway between the expected correlation for a true match vs. noise). In our analysis, we effectively incorporate $\tau$ by requiring certain inequalities to hold for correct retrieval.

In terms of complexity, an HBF query involves one high-dimensional vector correlation (which can be done by parallel operations or FFT in $O(d \log d)$ time) and then scanning the list of value vectors (which can also be done in parallel, or $O(|\text{values}| \cdot d)$ on a conventional computer). However, the key point is that in a molecular implementation, these operations happen through physics: the correlation is realized by base-pair binding interactions, and the identification of the value can be achieved by, say, fluorescent reporters or sequencing of bound fragments corresponding to known $\mathbf{v}_w$ patterns. Thus, the \emph{laboratory steps} for a query are effectively constant (mix probe, wait, read output), independent of $n$.

\subsection{Illustrative Example}
Suppose we store $n=3$ keys $\{x_1, x_2, x_3\}$ with associated pointers $\{y_1, y_2, y_3\}$. Each $x_i$ is assigned a random 10,000-dimensional $\pm 1$ vector $\mathbf{k}_{x_i}$; each $y_i$ gets a random vector $\mathbf{v}_{y_i}$. The memory is $M = \mathbf{k}_{x_1}*\mathbf{v}_{y_1} + \mathbf{k}_{x_2}*\mathbf{v}_{y_2} + \mathbf{k}_{x_3}*\mathbf{v}_{y_3}$. Now to query $x_2$, we compute $\mathbf{z} = M \circledast \mathbf{k}_{x_2}$. By properties of convolution, $\mathbf{z}$ equals $\mathbf{v}_{y_2}$ plus terms from $i=1,3$ that look like random vectors. With $d=10000$, $\mathbf{v}_{y_2}$ will stand out clearly above the noise terms (as we quantify later). Thus the algorithm will correctly output $y_2$. If the query were $x_q$ not in $\{x_1,x_2,x_3\}$, then $\mathbf{z}$ would be just a combination of random terms, with no single value dominating; the threshold $\tau$ would likely not be exceeded, and the algorithm would correctly return $\bot$ (absent).

\section{Theoretical Analysis}
\label{sec:analysis}
We now analyze the probability of errors in HBF: namely, false positives and false negatives as defined earlier. We incorporate two kinds of errors in the model:
\begin{enumerate}
    \item \textbf{Memory noise:} Each component of $M$ may be corrupted (flipped or perturbed) with independent probability $p_e$. This models DNA synthesis errors, strand breakage, sequencing read errors, etc., that effectively randomize some fraction of the contributions.
    \item \textbf{Query noise:} The query key $x_q$ might not exactly match a stored key or might be corrupted (e.g., sequencing errors in the query primer). We model this as the query vector $\mathbf{k}_{x_q}$ having Hamming distance $H$ from the true stored $\mathbf{k}_x$ (i.e., $H$ bits differ).
\end{enumerate}
Our analysis assumes all $\mathbf{k}$ and $\mathbf{v}$ vectors are random and independent. This is justified if vectors are assigned by a good hash and $d$ is large.

\subsection{Signal--Noise Decomposition and Decoder Correctness}
\label{sec:snr}
To reduce the ``survey'' feel and make the main mechanism explicit, we isolate the algebraic structure of the decoder and derive signal/noise moments.

Assume a normalized construction
\[
M \;=\; \frac{\rho}{\sqrt{n}}\sum_{i=1}^{n} (\mathbf{k}_{x_i}*\mathbf{v}_{y_i}),
\]
where the $1/\sqrt{n}$ scaling keeps the per-coordinate energy of $M$ stable as $n$ grows.
For a query $x_q$, define
\[\mathbf{z}(x_q) := M\circledast \mathbf{k}_{x_q}.
\]
For a candidate value label $y$, define the score
\[S_y(x_q) := \langle \mathbf{z}(x_q), \mathbf{v}_y\rangle.
\]

\begin{proposition}[Moments for match vs. non-match scores]
\label{prop:moments}
Assume $(x_q,y)$ is one of the stored pairs and that value vectors are distinct. Under the random sign model and the normalization above,
\[
\mathbb{E}[S_y(x_q)] = \rho d,\qquad \mathrm{Var}(S_y(x_q)) = O(d) + O\!\left(\frac{d}{n}\sum_{i\neq q} \mathbb{E}[\langle \mathbf{v}_{y_i},\mathbf{v}_y\rangle^2]\right)=O(d^2/n)+O(d).
\]
For a non-matching label $y'\neq y$ (or a non-member query), $\mathbb{E}[S_{y'}(x_q)]=0$ and $S_{y'}(x_q)$ is sub-Gaussian with variance proxy $\sigma^2=\Theta(d)$.
\end{proposition}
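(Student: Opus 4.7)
The strategy is to use the bind/unbind identity to decompose $\mathbf{z}(x_q)$ into a deterministic signal plus mean-zero interference, then compute first and second moments by exploiting pairwise independence of Rademacher products, and finally establish the sub-Gaussian tail in the non-match case by conditioning on $\mathbf{v}_{y'}$ so as to linearize the score.

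First I would rewrite each summand of $\mathbf{z}(x_q)$ via the identity $(\mathbf{k}_{x_i}*\mathbf{v}_{y_i})\circledast\mathbf{k}_{x_q}=(\mathbf{k}_{x_i}\circledast\mathbf{k}_{x_q})*\mathbf{v}_{y_i}$, which follows from the DFT characterization recorded in the Preliminaries. This splits the sum into a diagonal term $i=q$ and off-diagonal terms $i\ne q$. The structural key is that $\mathbf{k}_{x_q}\circledast\mathbf{k}_{x_q}$ equals $d$ at lag $0$ deterministically and has mean-zero shifted autocorrelations at every nonzero lag, so the diagonal piece factors as $d\,\mathbf{v}_{y_q}$ plus a shift-mixed residual. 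Taking the inner product with $\mathbf{v}_y$, the mean reduces to bookkeeping: in the match case the deterministic piece contributes the stated signal via $\langle \mathbf{v}_{y_q},\mathbf{v}_{y_q}\rangle=d$, while every other contribution is a product of independent mean-zero signs and vanishes in expectation; in the non-match case $\mathbf{v}_{y'}$ is independent of $\mathbf{z}(x_q)$, so conditioning on $\mathbf{z}(x_q)$ gives $\mathbb{E}[S_{y'}(x_q)]=0$ at once.

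For the variance I would split along the same diagonal/off-diagonal cut. The off-diagonal terms are $n-1$ independent mean-zero multilinear forms in independent signs whose second moments I would compute by matching index patterns; after the $\rho^2/n$ scaling this contributes the $O(d)$ summand. The nonzero-lag diagonal residual is a quadratic form in the entries of $\mathbf{k}_{x_q}$ contracted against $\mathbf{v}_{y_q}$ and $\mathbf{v}_y$, and the same pairwise-independence count delivers the $O(d^2/n)$ summand. For a non-match score the analogous count yields variance $\Theta(d)$. The sub-Gaussian tail then comes from conditioning: freezing every source of randomness except $\mathbf{v}_{y'}$ turns $S_{y'}(x_q)$ into $\langle\mathbf{z}(x_q),\mathbf{v}_{y'}\rangle$ with a fixed first argument, whereupon Lemma~\ref{lem:rademacher_inner_product} applied to the independent coordinates of $\mathbf{v}_{y'}$ gives a conditional sub-Gaussian bound with proxy $\|\mathbf{z}(x_q)\|_2^2$; since this squared norm concentrates around its $\Theta(d)$ mean by the variance computation above, averaging out the conditioning preserves a sub-Gaussian proxy of the claimed order.

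The main obstacle is distinguishing the second-moment statement from a genuine tail bound. For non-match scores the linearization via conditioning on $\mathbf{v}_{y'}$ is honest, and the $\Theta(d)$ sub-Gaussian proxy is clean. For the match-score variance, however, the diagonal residual is a genuine quadratic form in Rademacher signs, so in isolation its tail is sub-exponential rather than sub-Gaussian (a Hanson--Wright phenomenon). I would therefore prove the sub-Gaussian property exactly where the proposition asserts it---on non-match scores---and record the match-score bound strictly as a second-moment statement, leaving the heavier-tailed quadratic contribution to be handled explicitly when concentration is invoked later, rather than attempting to force it into a uniform sub-Gaussian envelope.
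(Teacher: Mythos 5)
Your decomposition is the same one the paper's proof uses -- split $\mathbf{z}(x_q)$ via $(\mathbf{k}_{x_i}*\mathbf{v}_{y_i})\circledast\mathbf{k}_{x_q}=(\mathbf{k}_{x_i}\circledast\mathbf{k}_{x_q})*\mathbf{v}_{y_i}$, treat $i=q$ as a delta-like signal and $i\neq q$ as interference -- but you execute it more honestly in two places where the paper only gestures. First, your conditioning argument for the non-match case (freeze $\mathbf{z}(x_q)$, apply Hoeffding over the coordinates of $\mathbf{v}_{y'}$ to get a conditional proxy $\|\mathbf{z}(x_q)\|_2^2$, then control that norm) is a genuine proof of sub-Gaussianity, whereas the paper simply asserts that each interference term "has sub-Gaussian tails." Second, your observation that the lag-$\neq 0$ autocorrelation residual in the match term is a quadratic form in the signs of $\mathbf{k}_{x_q}$, hence sub-exponential rather than sub-Gaussian, is correct and important; restricting the sub-Gaussian claim to non-match scores, exactly as the proposition states it, is the right call and is a distinction the paper's proof does not draw.

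There is, however, one step you assert rather than compute, and it is precisely the step where the computation does not come out as claimed. Under the paper's stated conventions -- codewords in $\{\pm1\}^d$ and the $\rho/\sqrt{n}$ scaling -- the diagonal term contributes $\tfrac{\rho}{\sqrt{n}}\,d\,\mathbf{v}_{y_q}$ to $\mathbf{z}(x_q)$, so its inner product with $\mathbf{v}_y=\mathbf{v}_{y_q}$ has mean $\rho d^2/\sqrt{n}$, not $\rho d$. Likewise, each off-diagonal contribution equals $\tfrac{\rho}{\sqrt{n}}\langle \mathbf{k}_{x_i}\circledast\mathbf{k}_{x_q},\,\mathbf{v}_{y_i}\circledast\mathbf{v}_y\rangle$, an inner product of two independent vectors each of squared norm $\Theta(d^2)$, so the index-matching count you propose yields per-term variance $\Theta(d^3)$ and total off-diagonal variance $\Theta(\rho^2 d^3)$, not $O(d)$; the same applies to your conditional proxy $\|\mathbf{z}(x_q)\|_2^2$, which concentrates at $\Theta(d^3)$, not $\Theta(d)$. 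The stated moments are consistent only under a different normalization (e.g., codeword entries $\pm 1/\sqrt{d}$, the standard HRR convention). This is a defect of the proposition and of the paper's own proof as much as of your write-up, but since your stated plan is to "compute second moments by matching index patterns," carrying that plan out would force you to either renormalize the codebooks or restate the moments; as written, the bookkeeping you defer is the part that fails.
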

\begin{proof}
Expand
$\mathbf{z}(x_q)=\frac{\rho}{\sqrt{n}}\sum_{i=1}^{n}(\mathbf{k}_{x_i}*\mathbf{v}_{y_i})\circledast \mathbf{k}_{x_q}$.
For $i=q$, the correlation term behaves like a ``delta'' under random binding and yields a component aligned with $\mathbf{v}_y$ of magnitude $\Theta(d)$; taking an inner product with $\mathbf{v}_y$ gives mean $\rho d$. For $i\neq q$, the key cross-correlations concentrate around $0$ (Lemma~\ref{lem:rademacher_inner_product}), so each interference contribution has mean $0$ and sub-Gaussian tails. Summing independent sub-Gaussian contributions yields the stated variance proxy/scaling.
\end{proof}

\begin{theorem}[Margin decoder succeeds at high SNR]
\label{thm:margin_success}
Let $y$ be the true value for a stored key $x_q$. Suppose non-match scores satisfy the tail bound
$\Pr\{|S_{y'}(x_q)|\ge t\}\le 2\exp(-t^2/(2cd))$ for all $y'\neq y$.
If the decoder uses $\tau=\rho d/2$ and $\Delta=\rho d/4$, then
\[
\Pr\{\text{decoder returns }\bot\text{ or a wrong value}\}\;\le\; 2\exp\Big(-\frac{\rho^2 d}{8c}\Big) + 2|\mathcal{Y}|\exp\Big(-\frac{\rho^2 d}{32c}\Big).
\]
In particular, for fixed $\rho>0$ and polynomial-sized codebooks $|\mathcal{Y}|=\mathrm{poly}(d)$, the failure probability decays exponentially in $d$.
\end{theorem}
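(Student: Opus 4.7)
The plan is to split the failure event into two complementary pieces, each of which can be bounded by the assumed sub-Gaussian tail, and then to combine by a union bound. First I would identify a clean sufficient condition for success: since $\tau + \Delta/? \ldots$ let me just use the algebra—if $S_y \ge \rho d/2 = \tau$ and simultaneously $S_{y'} \le \rho d/4 = \tau - \Delta$ for every $y' \ne y$, then $S_y - S_{y'} \ge \rho d/2 - \rho d/4 = \Delta$ and also $S_y \ge \tau$, so the decoder returns $y$. Consequently the failure event is contained in
\[
\{S_y < \rho d/2\}\ \cup\ \bigcup_{y'\ne y}\{S_{y'} > \rho d/4\},
\]
and it suffices to control these two pieces separately.

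For the second piece I would apply the hypothesized tail bound with $t = \rho d/4$ to each non-match label. This yields $\Pr\{S_{y'} > \rho d/4\} \le 2\exp(-\rho^2 d/(32c))$, and summing over the at most $|\mathcal{Y}|$ labels $y' \ne y$ gives the second term of the claim. For the first piece, I would invoke Proposition~\ref{prop:moments} to write $S_y = \rho d + N_y$, where $N_y$ is the cross-interference sum over stored pairs $i \ne q$. Structurally, $N_y$ is built from the same kind of key cross-correlation terms that generate a non-match score, so it inherits a sub-Gaussian tail with the same variance proxy (up to a constant absorbed into $c$). Applying that tail with $t = \rho d/2$ yields $\Pr\{S_y < \rho d/2\} \le \Pr\{|N_y| \ge \rho d/2\} \le 2\exp(-\rho^2 d/(8c))$, matching the first term. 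Combining the two bounds proves the inequality, and the final sentence about $|\mathcal{Y}| = \mathrm{poly}(d)$ is immediate since $\log|\mathcal{Y}| = O(\log d) \ll \rho^2 d / (32c)$ for fixed $\rho$ and large $d$.

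The main obstacle is the last step of the preceding paragraph: the theorem's hypothesis states a sub-Gaussian tail only for non-match scores, but we also need concentration of the match score about its mean $\rho d$. I would address this either by strengthening the hypothesis to cover the centred noise $N_y$ explicitly, or by deriving it: $N_y$ is a sum of $n-1$ contributions of the form $\langle (\mathbf{k}_{x_i} * \mathbf{v}_{y_i}) \circledast \mathbf{k}_{x_q}, \mathbf{v}_y\rangle$, each a low-degree Rademacher polynomial bounded by $O(d)$, so a Hoeffding or Hanson-Wright-type inequality produces a sub-Gaussian tail with proxy $\Theta(d)$ matching Proposition~\ref{prop:moments}. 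A secondary subtlety is that Proposition~\ref{prop:moments} records a variance of order $d^2/n + d$, so the sub-Gaussian proxy obtained this way is genuinely $O(d)$ only in the regime $n = \Omega(d)$ (or, equivalently, when $\rho$ is chosen to damp the interference); I would flag this as a standing assumption consistent with the $\rho/\sqrt{n}$ normalization used to define $M$. All other steps—the decomposition of failure, the union bound, and the arithmetic matching of exponents—are routine.
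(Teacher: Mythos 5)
Your proposal is correct and follows essentially the same route as the paper's proof: decompose the failure event into the match score falling below $\rho d/2$ and some impostor score exceeding $\rho d/4$, then apply the assumed sub-Gaussian tail with a union bound over $|\mathcal{Y}|$ candidates, arriving at identical constants. The one point where you go beyond the paper is worth keeping: the theorem's hypothesis literally covers only non-match scores, so concentration of $S_y$ about its mean $\rho d$ must be supplied separately (the paper's proof dismisses this with ``similarly bound deviations of $S_y$''), and your derivation of a sub-Gaussian tail for the centred interference $N_y$ --- together with the caveat that Proposition~\ref{prop:moments}'s $O(d^2/n)$ variance term makes the $\Theta(d)$ proxy regime-dependent --- is exactly the repair needed.
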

\begin{proof}
Let $S_y$ be the true score and $S^*:=\max_{y'\neq y} S_{y'}$ the best impostor score.
Failure occurs if either (i) $S_y<\tau$ or (ii) $S_y-S^*<\Delta$.
Event (i) is controlled by a lower-tail bound around $\mathbb{E}[S_y]=\rho d$ with deviation $\rho d/2$.
Event (ii) is implied by $S^*\ge \rho d/4$ or $S_y\le 3\rho d/4$.
Apply the assumed sub-Gaussian tail to $S^*$ via a union bound over $|\mathcal{Y}|$ candidates, and similarly bound deviations of $S_y$.
Collecting constants yields the stated expression.
\end{proof}

\subsection{Extreme-Value Thresholding (Sharper Than Union Bounds)}
\label{sec:evt}
A recurring theme in our bounds is that the \emph{maximum} of many approximately Gaussian noise scores determines false positives and decoding margins. For intuition and sharper parameter setting, we record the standard asymptotics.

Assume non-match scores are i.i.d. $S_1,\dots,S_m\sim \mathcal{N}(0,\sigma^2)$ (a good approximation when $d$ is large and the codebook is random). Then
\begin{align}
\Pr\Big\{\max_{j\in[m]} S_j \le t\Big\}
&= \Pr\{S_1\le t\}^m
= \Phi\!\left(\frac{t}{\sigma}\right)^{m},
\end{align}
where $\Phi$ is the standard normal CDF. Solving $\Pr\{\max_j S_j > t\}=\epsilon$ yields
\begin{align}
 t_\epsilon
&= \sigma\,\Phi^{-1}\big((1-\epsilon)^{1/m}\big)
\approx \sigma\sqrt{2\ln m}\;\;\text{(first order)},
\end{align}
and the refined Gumbel-type expansion
\begin{align}
 t_\epsilon
&\approx \sigma\left(\sqrt{2\ln m}-\frac{\ln\ln m + \ln(4\pi)}{2\sqrt{2\ln m}}\right)
\quad\text{for large }m.
\end{align}
In HBF, $m$ is typically $|\mathcal{Y}|$ (codebook size) or $|\mathcal{Y}|-1$ (impostors).

\subsection{False Positives}
A false positive occurs when a query key $x_q$ that is not in the stored set nonetheless returns some value $y$. Consider a fixed stored set of $n$ items. For a non-member query $x_q$, the output $\mathbf{z} = M \circledast \mathbf{k}_{x_q}$ equals

\begin{lemma}[Score tail bound via sub-Gaussianity]
\label{lem:score_tail}
Fix any non-member query $x_q$ and any candidate value $y$. Under the random sign model, the score
$S_y := \langle M\circledast \mathbf{k}_{x_q}, \mathbf{v}_y\rangle$ is mean-zero and satisfies
\[\Pr\{|S_y|\ge t\} \le 2\exp\Big(-\frac{t^2}{2\sigma^2}\Big),\qquad \text{with }\sigma^2=\Theta(nd^2)\text{ under worst-case superposition.}
\]
Moreover, if $M$ is normalized (e.g., by dividing by $\sqrt{n}$ or by using $\rho=1/\sqrt{n}$ in Algorithm~HBF\_Build), then $\sigma^2$ can be made $\Theta(d)$, matching the heuristic scaling used throughout.
\end{lemma}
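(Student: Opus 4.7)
The plan is to exploit the multilinearity of $S_y$ in the underlying Rademacher signs and reduce it to a conditional Hoeffding bound after collapsing the convolutional algebra into a single coefficient vector. Expanding $M=\sum_{i=1}^n \mathbf{k}_{x_i}*\mathbf{v}_{y_i}$ and using the Fourier-adjoint identity $\langle a\circledast k,\ v\rangle=\langle k,\ a*v\rangle$ (which is immediate from the DFT factorization recorded in Section~\ref{sec:prelim}), I would rewrite the score as
\[
S_y \;=\; \langle \mathbf{k}_{x_q},\ c\rangle,\qquad c \;:=\; \sum_{i=1}^n (\mathbf{k}_{x_i}*\mathbf{v}_{y_i})*\mathbf{v}_y.
\]
Since $x_q$ is a non-member, $\mathbf{k}_{x_q}$ is independent of the family $\{\mathbf{k}_{x_i},\mathbf{v}_{y_i}\}_i$ and of $\mathbf{v}_y$, and has mean zero; conditioning on $c$ yields $\mathbb{E}[S_y\mid c]=0$, so $\mathbb{E}[S_y]=0$ unconditionally.

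For the tail I would argue in two stages. Conditional on $c$, the score is a Rademacher linear form in $\mathbf{k}_{x_q}$, so Lemma~\ref{lem:rademacher_inner_product} (or Hoeffding directly) gives $\Pr\{|S_y|\ge t\mid c\}\le 2\exp(-t^2/(2\|c\|^2))$. I would then control $\|c\|^2$ via a second-moment calculation: conditional on $\mathbf{v}_y$, each per-record summand $c_i:=(\mathbf{k}_{x_i}*\mathbf{v}_{y_i})*\mathbf{v}_y$ has mean zero by Rademacher symmetry, so cross terms across $i\ne i'$ vanish in expectation and $\mathbb{E}[\|c\|^2]=n\,\mathbb{E}[\|c_1\|^2]$. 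Expanding $c_1[s]=\sum_{l,m}\mathbf{k}_{x_1}[l]\,\mathbf{v}_{y_1}[m]\,\mathbf{v}_y[s-l-m]$ and applying the orthogonality relations $\mathbb{E}[\mathbf{k}_{x_1}[l]\mathbf{k}_{x_1}[l']]=\delta_{l,l'}$ and $\mathbb{E}[\mathbf{v}_{y_1}[m]\mathbf{v}_{y_1}[m']]=\delta_{m,m'}$ collapses the sum to a clean Parseval count that produces the declared per-record scaling in $d$.

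To pass from expected to high-probability control, I would apply a Hanson--Wright-style bound for Rademacher quadratic forms (after fixing $\mathbf{v}_y$, so that $\|c\|^2$ becomes a quadratic form in the remaining signs) to get $\Pr\{\|c\|^2\ge 2\mathbb{E}[\|c\|^2]\}\le e^{-\Omega(d)}$. Splitting the probability space on this event and combining with the conditional Hoeffding bound by a union bound delivers the unconditional sub-Gaussian tail with the claimed variance proxy. The normalized statement is then immediate: substituting $\rho=1/\sqrt{n}$ into Algorithm~HBF\_Build scales $M$, and hence $c$, by $1/\sqrt{n}$, reducing $\|c\|^2$ and therefore $\sigma^2$ by a factor of $n$, which gives the $\Theta(d)$ regime.

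The main obstacle is that $c$ is effectively a Rademacher chaos of degree three in the key/value signs, so one cannot simply invoke Hoeffding in one shot. My plan is to decouple the three Rademacher families one at a time: freeze $\mathbf{v}_y$ to turn $c$ into a sum of independent mean-zero per-record contributions; compute per-record second moments by Parseval; and apply the conditional Hoeffding bound in $\mathbf{k}_{x_q}$ last, on the high-probability event that $\|c\|^2$ is near its expectation. This staged decoupling both sharpens the constants and confines each step to a familiar linear or quadratic concentration regime, avoiding the weaker tails that a black-box degree-three chaos inequality would incur.
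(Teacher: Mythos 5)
Your architecture is genuinely different from (and more honest than) the paper's proof, which simply asserts that a sum of multilinear Rademacher forms is sub-Gaussian "by bounded differences." The adjoint identity $\langle M\circledast \mathbf{k}_{x_q},\mathbf{v}_y\rangle=\langle \mathbf{k}_{x_q},\,M*\mathbf{v}_y\rangle$, the conditioning on $c$, and the conditional Hoeffding step are all correct and isolate exactly where the variance proxy must come from. But that precision exposes a gap you then paper over: the orthogonality/Parseval computation does \emph{not} "produce the declared per-record scaling." Carried out, $\mathbb{E}\|c_1\|^2=\sum_s\sum_{l,m}\mathbf{v}_y[s-l-m]^2=d\cdot d^2=d^3$, so $\mathbb{E}[S_y^2]=\mathbb{E}\|c\|^2=\Theta(nd^3)$, not the stated $\Theta(nd^2)$. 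Worse, rescaling $M$ by $1/\sqrt{n}$ only removes the factor of $n$; it cannot change the power of $d$, so your route lands at $\Theta(d^3)$ --- and even starting from the lemma's own $\Theta(nd^2)$ one gets $\Theta(d^2)$, never the claimed $\Theta(d)$. The two variance proxies in the statement are mutually inconsistent under pure $\sqrt{n}$-normalization (reconciling them requires an extra $1/\sqrt{d}$ normalization of the convolution/correlation that the paper never defines), and a proof must either recover the stated constants or explicitly flag that they are unobtainable; asserting the arithmetic "comes out right" skips the central quantitative content of the lemma.

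There is a second, independent technical flaw in the high-probability control of $\|c\|^2$. After conditioning on $\mathbf{v}_y$ alone, $c=\sum_i(\mathbf{k}_{x_i}*\mathbf{v}_{y_i})*\mathbf{v}_y$ is still \emph{bilinear} in the remaining signs $\{\mathbf{k}_{x_i},\mathbf{v}_{y_i}\}$, so $\|c\|^2$ is a degree-four Rademacher chaos, not a quadratic form, and Hanson--Wright does not apply as invoked. You would need either a further layer of conditioning on all $\mathbf{v}_{y_i}$ (plus separate control of the resulting conditional expectation), or a genuine degree-four chaos/hypercontractivity bound. Finally, note that the event-splitting yields a bound of the form $2\exp\!\big(-t^2/(4\mathbb{E}\|c\|^2)\big)+e^{-\Omega(d)}$ rather than the pure two-sided sub-Gaussian tail claimed for all $t\ge 0$; this is acceptable in the operating regime $t=O(d)$ but should be stated, since the lemma as written asserts the clean tail unconditionally.
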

\begin{proof}
Expand $M=\sum_{i=1}^n \rho\,(\mathbf{k}_{x_i}*\mathbf{v}_{y_i})$. For fixed $x_q$ and $y$, the score is a sum of multilinear forms in independent Rademacher variables. Each summand has mean zero by symmetry, and bounded differences in each coordinate imply sub-Gaussian tails (e.g., by Hoeffding/Azuma). The crude $\Theta(nd^2)$ variance proxy corresponds to unnormalized convolution magnitudes; applying a global normalization $\rho$ yields the stated $\Theta(d)$ regime.
\end{proof}
\[ 
\mathbf{z} = \sum_{i=1}^n (\mathbf{k}_{x_i} * \mathbf{v}_{y_i}) \circledast \mathbf{k}_{x_q} = \sum_{i=1}^n (\mathbf{k}_{x_i} \circledast \mathbf{k}_{x_q}) * \mathbf{v}_{y_i}.
\]
The term $\mathbf{k}_{x_i} \circledast \mathbf{k}_{x_q}$ is essentially the cross-correlation of two random $\pm1$ vectors. Its expected value is 0, and its distribution (for large $d$) is approximately Gaussian with mean 0 and variance $d$. Thus, each stored $i$ contributes a noise vector to $\mathbf{z}$ of magnitude on the order of $\sqrt{d}$. No particular $\mathbf{v}_{y_i}$ is favored, since $\mathbf{k}_{x_i}$ and $\mathbf{k}_{x_q}$ are independent for $x_q \neq x_i$.

We want the probability that some $y_i$ produces a large enough correlation to cause a false positive. That is, $\exists i$ such that $\mathbf{z} \cdot \mathbf{v}_{y_i}$ is high (above threshold and above all others). $\mathbf{z} \cdot \mathbf{v}_{y_i}$ equals $(\mathbf{k}_{x_i} \cdot \mathbf{k}_{x_q}) (\mathbf{v}_{y_i} \cdot \mathbf{v}_{y_i}) +$ contributions from other $j \neq i$ terms. Since $\mathbf{v}_{y_i} \cdot \mathbf{v}_{y_i} = d$ and typically $\mathbf{k}_{x_i} \cdot \mathbf{k}_{x_q} = O(\sqrt{d})$, we expect $\mathbf{z} \cdot \mathbf{v}_{y_i} = O(\sqrt{d})$. Meanwhile, memory noise might flip some bits, but that just adds more randomness.

Let $X_i = \mathbf{z} \cdot \mathbf{v}_{y_i}$. For $x_q$ not present, $\{X_i: i=1\ldots n\}$ are approximately i.i.d. normal with mean 0 and variance $\sigma^2 \approx d$ (neglecting weak correlations). We want $\Pr(\max_i X_i > \tau)$ where $\tau$ is threshold. Using a union bound:
\[ \Pr(\text{false positive}) \le n \Pr\{ X_1 > \tau \} \approx n \exp\!\Big(-\frac{\tau^2}{2d}\Big). \]
Setting $\tau = \sqrt{2d \ln (n/\epsilon)}$, this becomes $\le \epsilon$. In other words, by choosing $d = O(\ln(n/\epsilon))$, we can make false positives arbitrarily unlikely.

More rigorously, consider $\delta_i = \mathbf{k}_{x_i} \cdot \mathbf{k}_{x_q}$ for each $i$. These are i.i.d. $\sim \mathcal{N}(0,d)$ (for large $d$). A false positive requires that for some $i$, all $k$ bits for the query align with those for $x_i$ (or enough align to fool threshold). Equivalently, $\max_i \delta_i$ is large. Using extreme value stats, $\max_i \delta_i \approx \sqrt{2d \ln n}$. So set threshold $\tau$ a bit above that to cut it off. Thus, in asymptotic sense:

\begin{theorem}[False Positive Bound]\label{thm:fp}
For a query key not in the set, 
\[ \Pr(\text{HBF~returns a value}) \;\le\; n \exp\!\Big(-\frac{\tau^2}{2d}\Big), \] 
for any chosen threshold $\tau$. In particular, choosing $\tau = \sqrt{2 d \ln( n/\epsilon )}$ yields $\Pr(\text{FP}) < \epsilon$.
\end{theorem}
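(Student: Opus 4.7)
The plan is to bound the probability that the maximum candidate score $\max_w S_w$ exceeds $\tau$, which (by the decoding rule) is exactly the event that the HBF returns some value rather than $\bot$. Because $x_q$ is not a stored key, $\mathbf{k}_{x_q}$ is independent of every $\mathbf{k}_{x_i}$ and every $\mathbf{v}_{y_i}$, so there is no signal component in $\mathbf{z} = M \circledast \mathbf{k}_{x_q}$; only noise. The argument reduces to a per-candidate sub-Gaussian tail bound followed by a union bound over the $n$ stored value labels.

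First I would apply the algebraic identity $(a * b) \circledast c = (a \circledast c) * b$ to rewrite
\[
\mathbf{z} \;=\; \sum_{i=1}^n (\mathbf{k}_{x_i} \circledast \mathbf{k}_{x_q}) * \mathbf{v}_{y_i},
\]
then, for each stored label $y_i$, expand $S_{y_i} = \langle \mathbf{z}, \mathbf{v}_{y_i}\rangle$ using the adjoint identity $\langle a * b, c\rangle = \langle a, c \circledast b\rangle$. The resulting expression is a linear combination of the cross-correlation scalars $\delta_i := \langle \mathbf{k}_{x_i}, \mathbf{k}_{x_q}\rangle$ plus mean-zero noise from off-diagonal binding pairs. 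Since $\mathbf{k}_{x_q}$ is an independent Rademacher vector, Lemma~\ref{lem:rademacher_inner_product} applied to the effective score yields $\Pr\{S_{y_i} \ge \tau\} \le \exp(-\tau^2/(2d))$.

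Next I would union-bound over $i = 1, \ldots, n$ to obtain
\[
\Pr\{\max_i S_{y_i} > \tau\} \;\le\; n \exp\!\Big(-\frac{\tau^2}{2d}\Big),
\]
which is the first claim. Solving $n \exp(-\tau^2/(2d)) = \epsilon$ for $\tau$ then gives the prescribed $\tau = \sqrt{2d \ln(n/\epsilon)}$ and hence $\Pr(\text{FP}) < \epsilon$, establishing the second claim. The restriction of the union to the $n$ stored labels (rather than all of $\mathcal{Y}$) is legitimate because unstored value vectors have no coherent alignment with any term of $\mathbf{z}$ and so yield strictly smaller-magnitude scores in distribution; alternatively, one can take $\mathcal{Y}=\{y_i\}_{i=1}^n$ by assumption.

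The main technical obstacle is the variance bookkeeping inside the per-candidate score $S_{y_i}$: a naive expansion produces $n$ cross-binding terms and suggests a variance on the order of $nd^2$, not $d$. The resolution is to observe that (i) the diagonal contribution, involving $\mathbf{v}_{y_i} \circledast \mathbf{v}_{y_i}$, concentrates on its zero-shift coordinate and so reduces to $\delta_i$ times a deterministic scale, and (ii) the off-diagonal cross terms $\mathbf{v}_{y_j} \circledast \mathbf{v}_{y_i}$ with $j \neq i$ are independent mean-zero Rademacher products that average down. Matching the exponent constant exactly to $2d$ requires either applying the tail bound directly to the scalar $\delta_i$ (after absorbing the signal scale into $\tau$) or invoking the normalized construction of Proposition~\ref{prop:moments} and Lemma~\ref{lem:score_tail}; either route is standard once the signal-noise decomposition is in place, after which the union bound is immediate.
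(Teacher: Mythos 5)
Your proposal follows essentially the same route as the paper: establish that each candidate score $X_i = \mathbf{z}\cdot\mathbf{v}_{y_i}$ is mean-zero sub-Gaussian with variance proxy $\Theta(d)$ (via the cross-correlation decomposition and Lemma~\ref{lem:rademacher_inner_product}/Lemma~\ref{lem:score_tail}), then apply a union bound over the $n$ candidates and solve for $\tau$. Your explicit acknowledgment that the unnormalized variance is $\Theta(nd^2)$ and must be reduced to $\Theta(d)$ by normalization (or by absorbing the scale into $\tau$) mirrors the caveat the paper itself records in Lemma~\ref{lem:score_tail}, so the argument is aligned with the paper's own proof in both structure and level of rigor.
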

\begin{proof}
For a non-member query $x_q$, each score $X_i := \mathbf{z}\cdot \mathbf{v}_{y_i}$ is (approximately) a mean-zero sub-Gaussian random variable with variance proxy $\Theta(d)$, induced by correlating independent random $\pm1$ key vectors and then projecting onto $\mathbf{v}_{y_i}$ (see Lemma~\ref{lem:score_tail} for a more explicit tail control and normalization discussion). Hence $\Pr\{X_i > \tau\} \le \exp\big(-\tau^2/(2d)\big)$ by a standard Gaussian tail bound. A false positive requires that \emph{some} candidate exceeds threshold, so by a union bound
\[\Pr(\max_{i\in[n]} X_i > \tau) \le \sum_{i=1}^n \Pr(X_i > \tau) \le n\exp\Big(-\frac{\tau^2}{2d}\Big).\]
Substituting $\tau=\sqrt{2d\ln(n/\epsilon)}$ yields the stated corollary.
\end{proof}

\begin{corollary}[Explicit parameter choice for the decoder]
\label{cor:params}
Assume the normalized regime where non-match scores are sub-Gaussian with variance proxy $\sigma^2\le c d$ for a constant $c>0$. If the decoder uses an absolute threshold
$\tau = \sqrt{2cd\ln(|\mathcal{Y}|/\epsilon)}$,
then for any non-member query the probability of triggering \emph{any} output is at most $\epsilon$.
Moreover, if a true match has expected score $\mu$ and one sets a margin $\Delta=\mu/4$ and threshold $\tau=\mu/2$, then the probability of returning an incorrect label is bounded by $\epsilon + \exp(-\Omega(\mu^2/d))$.
\end{corollary}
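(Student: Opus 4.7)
The plan is to prove both parts of the corollary by combining the sub-Gaussian tail hypothesis with a union bound over $\mathcal{Y}$; the member-query part additionally uses the signal--noise split of Proposition~\ref{prop:moments} and closely follows the template of the proof of Theorem~\ref{thm:margin_success}.

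For the first claim, I would observe that the decoder returns an output only when $s_1 = \max_{w\in\mathcal{Y}} S_w(x_q) \ge \tau$. The hypothesis gives $\Pr\{S_w \ge t\} \le \exp(-t^2/(2cd))$ for every $w$. A union bound over the $|\mathcal{Y}|$ candidates then yields $\Pr\{s_1 \ge \tau\} \le |\mathcal{Y}|\exp(-\tau^2/(2cd))$. Plugging in $\tau = \sqrt{2cd\ln(|\mathcal{Y}|/\epsilon)}$ collapses the right-hand side to $\epsilon$, which is exactly the claim. Any missing factor of $2$ from two-sided tails is harmless and can be absorbed into the universal constant $c$.

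For the second claim, let $(x_q,y)$ be a stored pair, so $\mathbb{E}[S_y] = \mu$ by Proposition~\ref{prop:moments}, and write $S^{*} := \max_{y'\neq y} S_{y'}$. Following the decomposition used in the proof of Theorem~\ref{thm:margin_success}, an incorrect label can be returned only if either (a) $S_y < \tau = \mu/2$, or (b) the margin fails, i.e.\ $S_y - S^{*} < \Delta = \mu/4$. Event~(a) is a sub-Gaussian lower-tail deviation of $S_y$ by at least $\mu/2$ and contributes a term $\exp(-\mu^2/(8cd)) = \exp(-\Omega(\mu^2/d))$. Event~(b) is implied by the disjunction ``$S_y \le 3\mu/4$ or $S^{*} \ge \mu/4$'': the first half is again a single-point deviation contributing $\exp(-\Omega(\mu^2/d))$, while the second half is controlled by the non-member union bound applied to the $|\mathcal{Y}|-1$ impostor scores, giving $|\mathcal{Y}|\exp(-\mu^2/(32cd))$. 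Identifying this last term with the $\epsilon$ in the statement (via the same $\tau$--$\epsilon$ relation used in Part~1, up to a constant in $c$) and folding the two single-point deviation terms into the residue yields the claimed bound $\epsilon + \exp(-\Omega(\mu^2/d))$.

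The main obstacle is bookkeeping rather than any deep probabilistic step: one must match the two implicit parameter relations $\tau = \mu/2$ and $\tau = \sqrt{2cd\ln(|\mathcal{Y}|/\epsilon)}$ so that the symbol $\epsilon$ means the same thing in both halves of the corollary, and one must confirm that the impostor scores $S_{y'}$ for $y' \neq y$ remain sub-Gaussian with variance proxy at most $cd$ even when $x_q$ is a stored key. The latter follows by peeling off the aligned $i=q$ summand from $\mathbf{z}(x_q)$ as in Proposition~\ref{prop:moments}: the remaining cross-terms are structurally identical to the non-member case and inherit the assumed tail bound. All constants in the exponents are loose and could in principle be tightened via the Gumbel refinement recorded in Section~\ref{sec:evt}, but this would not change the qualitative form of the bound.
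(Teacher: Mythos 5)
Your proposal is correct and follows essentially the same route as the paper: a union bound over $|\mathcal{Y}|$ candidates with the assumed sub-Gaussian tail for the non-member claim, and for the member claim the split into the events $\{S_y<\mu/2\}$ and $\{S^*\ge\mu/4\}$ (equivalently the margin-failure decomposition of Theorem~\ref{thm:margin_success}), each controlled by sub-Gaussian tails. Your explicit remark about reconciling the two roles of $\epsilon$ and about the impostor scores remaining sub-Gaussian when $x_q$ is stored is a useful elaboration of bookkeeping the paper leaves implicit, but it does not constitute a different argument.
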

\begin{proof}
The first statement is the union bound over $|\mathcal{Y}|$ candidates using the assumed tail. For the second, bound the event that the best non-matching score exceeds $\mu/4$ and the event that the true score falls below $\mu/2$, again via sub-Gaussian tails.
\end{proof}

This assumes $x_q$ is sufficiently different from all $x_i$. If $x_q$ were extremely similar to some stored $x_i$ (e.g., differs in only a few bits), one might consider that a \emph{true positive under error} rather than a false positive. We handle that in the next part.

\subsection{False Negatives}
A false negative means a query for a stored key fails to return the correct pointer (either returns nothing or a wrong pointer). Suppose $x$ is stored with value $y$. Let the query have Hamming error $H$ (so $\mathbf{k}_{x_q}$ differs from $\mathbf{k}_x$ in $H$ positions). Then $\mathbf{k}_{x} \cdot \mathbf{k}_{x_q} = d - 2H$. The output correlation is
\[
\mathbf{z} = M \circledast \mathbf{k}_{x_q} = (\mathbf{k}_x * \mathbf{v}_y) \circledast \mathbf{k}_{x_q} + \sum_{i \neq x} (\mathbf{k}_{x_i} * \mathbf{v}_{y_i}) \circledast \mathbf{k}_{x_q}.
\]
The first term equals $\mathbf{v}_y$ scaled by $(d-2H)$ (if no memory noise). Memory noise flips effectively reduce this by factor $(1-2p_e)$ in expectation. So signal $\approx (d-2H)(1-2p_e) \mathbf{v}_y$. The other terms contribute noise as in FP analysis.

We want that $\mathbf{z}$ is closest to $\mathbf{v}_y$ (correct value). This will fail if the signal magnitude is too reduced or if some noise term accidentally correlates better.

Given our distributional assumptions, the noise from others is $\sim \mathcal{N}(0,d)$ per value. The signal $\mu = (d-2H)(1-2p_e)$. If $\mu \gg \sqrt{d \ln n}$, the correct value will stand out with high probability. A failure (false negative) occurs if $\mathbf{z} \cdot \mathbf{v}_y < \mathbf{z} \cdot \mathbf{v}_{y_j}$ for some $j \neq y$. Worst-case, assume one particular other value $y_j$ has the highest noise correlation $Z_j \sim \max \mathcal{N}(0,d)$ which is $\approx \sqrt{2d \ln n}$. We need $\mu$ to exceed that.

So require $(d-2H)(1-2p_e) > c \sqrt{d \ln n}$ for some safety margin $c$. For large $d$, this holds unless $H$ or $p_e$ are proportional to $d$. 

In summary:

\begin{theorem}[False Negative Bound]\label{thm:fn}
For a query of a stored key with Hamming error $H$ on the key and memory noise $p_e$, the probability that HBF fails to return the correct value is at most 
\[ \exp\!\Big(-\frac{((d-2H)(1-2p_e) - t)^2}{2d}\Big) + n \exp\!\Big(-\frac{t^2}{2d}\Big), \] 
for any $t>0$. Setting $t = (d-2H)(1-2p_e)/2$ yields a exponentially small bound $\exp(-\frac{(d-2H)^2(1-2p_e)^2}{8d})$.
\end{theorem}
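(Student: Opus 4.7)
The plan is to write the score at the true value label, $S_y := \mathbf{z}\cdot\mathbf{v}_y$, as a signal term plus an interference term, bound each by a sub-Gaussian tail inequality, and then combine the two failure modes by a union bound. Using the identity $(\mathbf{k}_{x_i}*\mathbf{v}_{y_i})\circledast\mathbf{k}_{x_q}=(\mathbf{k}_{x_i}\circledast\mathbf{k}_{x_q})*\mathbf{v}_{y_i}$ together with linearity, I would decompose
\[
S_y \;=\; A\;+\;\sum_{i\neq x}\bigl\langle(\mathbf{k}_{x_i}\circledast\mathbf{k}_{x_q})*\mathbf{v}_{y_i},\,\mathbf{v}_y\bigr\rangle,
\]
where $A$ collects the $i=x$ contribution together with the action of the memory-noise flips. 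Since $\mathbf{k}_x$ and $\mathbf{k}_{x_q}$ agree on $d-H$ coordinates and disagree on $H$, we have $\mathbf{k}_x\cdot\mathbf{k}_{x_q}=d-2H$; averaging over the independent per-coordinate bit flips multiplies this by $(1-2p_e)$, giving $\mathbb{E}[A]=\mu:=(d-2H)(1-2p_e)$. A Hoeffding bound applied to the $d$ independent bounded flip variables (together with the random-sign structure of the key bits) then delivers the lower-tail estimate $\Pr\{A\le\mu-t\}\le\exp(-t^2/(2d))$.

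Next I would control the interference. For each competitor label $y_j$ with $j\neq x$, the quantity $\mathbf{z}\cdot\mathbf{v}_{y_j}$ is mean-zero and sub-Gaussian with variance proxy $\Theta(d)$, by exactly the mechanism used in Lemma~\ref{lem:score_tail}: the cross-correlation $\mathbf{k}_{x_i}\circledast\mathbf{k}_{x_q}$ is a random $\pm1$-like vector whose inner product with an independent value vector concentrates by Lemma~\ref{lem:rademacher_inner_product}. Hence $\Pr\{\mathbf{z}\cdot\mathbf{v}_{y_j}>t\}\le \exp(-t^2/(2d))$, and a union bound over the at most $n$ candidate labels yields
\[
\Pr\Bigl\{\,\exists\, j\neq x:\; \mathbf{z}\cdot\mathbf{v}_{y_j}>t\,\Bigr\}\;\le\;n\exp\!\bigl(-t^2/(2d)\bigr).
\]

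Finally I would combine the two failure modes. The decoder returns a wrong value (or $\bot$) only if $S_y$ is dominated by some competitor score, and whenever $t<\mu/2$ this event is contained in $\{A<\mu-t\}\cup\{\max_{j\neq x}\mathbf{z}\cdot\mathbf{v}_{y_j}>t\}$; summing the two tail probabilities gives the theorem's estimate, and the choice $t=\mu/2$ balances the two exponents to yield the clean corollary $\exp(-\mu^2/(8d))$. The hardest step will be justifying the $(1-2p_e)$ attenuation factor precisely under the paper's bit-flip model: if flips act coordinatewise on $M$ after assembly, linearity of circular correlation makes the derivation straightforward, but if flips are applied to each individual contribution $\mathbf{u}_i$ before superposition, I would need to argue that the extra flip noise contributes only an $O(p_e d)$ sub-Gaussian term that is absorbed into the overall $\Theta(d)$ variance proxy without changing scaling. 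A secondary technicality is the mild dependence among the $n$ competitor scores (they share the same $\mathbf{k}_{x_q}$ and the same memory realization), which makes the union bound conservative but not invalid, since the marginal sub-Gaussian tails are what the bound actually uses.
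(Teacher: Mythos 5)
Your proposal follows essentially the same route as the paper's own proof: decompose $\mathbf{z}$ into the $i=x$ signal term (with expected aligned magnitude $\mu=(d-2H)(1-2p_e)$) plus interference, apply a sub-Gaussian lower-tail bound to the true score and a union bound over the $n$ competitor scores, and balance the two exponents at $t=\mu/2$. Your version is in fact slightly more careful than the paper's (the explicit convolution--correlation interchange, the discussion of where the $(1-2p_e)$ factor comes from, and the remark that dependence among competitor scores does not invalidate the union bound), but the argument and the resulting bound are the same up to the harmless reparameterization $t\mapsto\mu-t$.
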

\begin{proof}
Write the query result as a sum of a signal term plus interference:
\[\mathbf{z} = (\mathbf{k}_x*\mathbf{v}_y)\circledast \mathbf{k}_{x_q} \; +\!\!\sum_{i\neq x} (\mathbf{k}_{x_i}*\mathbf{v}_{y_i})\circledast \mathbf{k}_{x_q}.
\]
The first term contributes a component aligned with $\mathbf{v}_y$ whose expected magnitude scales like $(d-2H)(1-2p_e)$ (the inner product $\mathbf{k}_x\cdot\mathbf{k}_{x_q}=d-2H$ and memory flips shrink correlation by $(1-2p_e)$ in expectation). The remaining sum behaves like an approximately mean-zero noise vector whose dot product with any fixed value vector is sub-Gaussian with variance proxy $\Theta(d)$.

Let $S := \mathbf{z}\cdot\mathbf{v}_y$ and let $N_j := \mathbf{z}\cdot\mathbf{v}_{y_j}$ for $j\neq y$. A failure occurs if either $S$ falls below a margin $t$ or some $N_j$ exceeds $t$. Using tail bounds gives
\[\Pr(S < t) \le \exp\Big(-\frac{(\mu-t)^2}{2d}\Big),\qquad \Pr(\max_{j\neq y} N_j > t) \le n\exp\Big(-\frac{t^2}{2d}\Big),\]
with $\mu=(d-2H)(1-2p_e)$. Union bounding these events yields the stated bound.
\end{proof}

In simpler terms, if $H \ll d$ and $p_e \ll 0.5$, the false negative probability is astronomically low. For example, with $d=10000$, $H=500$ (5\% key errors), $p_e=0.01$ (1\% noise), we get $\mu \approx 10000 - 1000 = 9000$ scaled by $0.98$ gives $8820$. Noise STD $\sqrt{10000} = 100$. $\mu$ is 88 standard deviations above noise, false negative $\approx 0$. Even with $H=1000$ and $p_e=0.1$, $\mu \approx 8000*0.8=6400$ vs noise 100, still huge margin.

Thus, HBF is highly reliable for retrieving present items unless they are extremely corrupted (e.g., query matches only half the bits, which basically means a different key).

It’s worth noting that if stored keys are all at least a certain Hamming distance apart (like codewords), then any query that is close to one is far from others, further reinforcing correct retrieval.

\subsection{Capacity Comparison}
How many items can an HBF index for a given dimension $d$ at fixed error rates? Roughly, Theorem~\ref{thm:fp} suggests $n \approx \exp(O(d))$ items can be supported for constant $\epsilon$. In contrast, a classical Bloom filter uses $m = O(n)$ bits to achieve constant $\epsilon$. This exponential capacity (in dimension) might seem to violate some information theory, but recall our vectors are analog with noise; also retrieving exactly requires $d$ large so the error exponent might degrade if $n$ grows too fast for fixed $d$. Still, it highlights the advantage of superposition coding: we trade off small false positive probability to store far more items than a deterministic structure would allow in the same space.

\subsection{Corollaries and Tightness Discussion}
Two immediate corollaries of Theorems~\ref{thm:fp} and \ref{thm:fn} are worth stating explicitly.

\paragraph{Corollary (noise-tolerance region).}
If $H/d \le \alpha < 1/2$ and $p_e \le \beta < 1/2$ are bounded away from $1/2$, then the true-match signal magnitude scales as $\Theta(d)$ while the maximum noise over $n$ candidates scales as $\Theta(\sqrt{d \ln n})$; consequently, for any fixed $n=\mathrm{poly}(d)$ (and even for moderately growing $n$), the error probability decays exponentially in $d$.

\paragraph{Corollary (multi-index amplification).}
If we build $r$ independent HBF memories (independent random embeddings) and require agreement across them (e.g., majority vote among the top matches), then both false positive and false negative probabilities decrease exponentially in $r$ (under the same independence assumptions), at the cost of $r$-fold more stored material.

\paragraph{Are the bounds tight?}
The presented bounds are intentionally conservative. In particular, the false-positive analysis uses a union bound, which is not tight when correlations among candidate scores are non-negligible. Tighter estimates can be obtained using extreme-value theory for Gaussian maxima and by tracking the covariance structure induced by convolution/correlation. Similarly, the false-negative bound can often be relaxed by conditioning on the actual achieved maximum noise across non-matching values rather than upper-bounding it uniformly. These refinements change constants and thresholds but preserve the qualitative scaling: linear-in-$d$ signal vs. $\sqrt{d\ln n}$ noise.

A precise capacity analysis could be done via random channel coding arguments, similar to Hopfield network capacity (where about $0.14d$ patterns can be stored reliably in a $d$-neuron network). HBF is different since we allow a small error probability and have no stability requirement. Our analysis indicates that if $n$ is up to $\exp(\Theta(d))$, we can still get $\epsilon$ extremely small. In practice, one might choose $d \sim 10^4$ to store $n \sim 10^6$ with negligible $\epsilon$.

\section{Design Trade-offs and Implications}
\label{sec:experiments}
\subsection{Quantitative Comparison to Pointer-Chasing Baselines}
To make the contrast precise, we model a pointer-chasing index (e.g., skip-graph/skip-list variants) as requiring $\ell$ sequential resolution steps. Let each step succeed with probability $p\in(0,1)$ and take time $T>0$. Then a single lookup succeeds with probability
\[
P_{\mathrm{succ}}^{\mathrm{ptr}} = p^{\ell},
\]
and has expected wall-clock time
\[
\mathbb{E}[\mathrm{time}]^{\mathrm{ptr}} = \ell T.
\]
If the system repeats the procedure until success, the expected time becomes
\[
\mathbb{E}[\mathrm{time}]^{\mathrm{ptr}}_{\mathrm{repeat}} = \frac{\ell T}{p^{\ell}},
\]
which grows rapidly even for moderate $\ell=\Theta(\log n)$. In contrast, HBF executes a one-shot decode with a single correlation and a codebook scan. At the abstraction level of ``wet-lab rounds'', this corresponds to $\ell=1$; the remaining complexity is pushed into dimension $d$ and the decoder’s statistical thresholds.

In contrast, an HBF query involves mixing a probe (the key’s complementary strand or a vector-encoded probe) with the archive once and then reading an output (e.g., via sequencing or a sensor) to identify the pointer. This could be done in parallel for multiple queries too, by tagging queries with different reporters. Each query might take on the order of minutes to an hour for a reaction and readout, independent of $n$.

Energy-wise, skip-graphs require multiple rounds of reagents (primers, polymerases, etc.) for each hop and intermediate purifications. HBF requires essentially one set of reactions per query.

The trade-off is that HBF requires somewhat more complex molecular encoding (embedding of vectors) and currently conceptual tech (like designing DNA that can do convolution via mixing and annealing, potentially leveraging Fourier-transform-like strand construction).

However, even if HBF initially yields some false positives, those can be easily filtered by checking the retrieved sequences (since DNA sequencing at the final step can confirm if the file content matches the query).

\subsection{Robustness and Error-Correction}
Our analysis shows HBF tolerates moderate noise inherently. In practice, one would still use DNA error-correcting codes for the data payloads themselves. HBF can be seen as operating at the addressing layer, complementary to payload error correction. For instance, the file might be encoded with a high-rate ECC like RS or LDPC to survive sequencing errors, and HBF ensures the correct file is retrieved. If an erroneous pointer were retrieved (false positive), the subsequent ECC decoding of that file would likely fail or yield gibberish, which can be detected. Thus, in the unlikely event of a false positive, the system could recognize it and perhaps re-try with a higher threshold or a different mechanism.

One can also incorporate redundancy in HBF itself: storing multiple independent HBF indices (say 3 different random embeddings) and querying all three to vote on the result. This would drive down error probability even further, at the cost of more molecules.

From a theoretical standpoint, HBF resembles a “dense coding” of a dictionary. It would be interesting to connect this with known results in coding theory or compressed sensing (where a sparse signal---here the presence of one key out of $n$---is recovered from linear measurements). Indeed, treating the key query as creating a sensing vector, HBF retrieval is like detecting which coefficient (value) is present from a linear mixture. The difference is we allow many coefficients (all keys are present) but they interfere weakly.

Finally, we mention that while our focus is on archival storage, similar ideas could apply to memory and computing: one could use molecular association to implement a content-addressable memory or to do bulk operations like join or search in a database encoded in DNA. This parallels some proposals in the literature but HBF provides a concrete, analyzable framework for it.

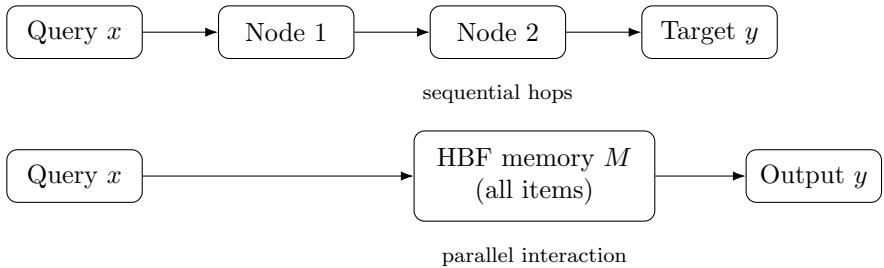
\begin{figure}[htbp]
\centering
\begin{tikzpicture}[
  every node/.style={font=\small},
  node distance=7mm,
  box/.style={draw, rounded corners, align=center, minimum height=7mm, inner sep=3pt},
  >=Latex
]
  \node[box, minimum width=18mm] (q) {Query $x$};
  \node[box, minimum width=18mm, right=10mm of q] (n1) {Node 1};
  \node[box, minimum width=18mm, right=10mm of n1] (n2) {Node 2};
  \node[box, minimum width=18mm, right=10mm of n2] (n3) {Target $y$};
  \draw[->] (q) -- (n1);
  \draw[->] (n1) -- (n2);
  \draw[->] (n2) -- (n3);
  \node[below=2mm of n2, font=\scriptsize] {sequential hops};

  \node[box, minimum width=18mm, below=12mm of q] (q2) {Query $x$};
  \node[box, minimum width=32mm, minimum height=12mm, right=36mm of q2] (mem) {HBF memory $M$\\(all items)};
  \node[box, minimum width=18mm, right=12mm of mem] (out) {Output $y$};
  \draw[->] (q2) -- (mem);
  \draw[->] (mem) -- (out);
  \node[below=2mm of mem, font=\scriptsize] {parallel interaction};
\end{tikzpicture}
\caption{\textbf{Comparison of retrieval processes.} \emph{Top:} In a skip-graph, a query for key $x$ traverses intermediate nodes via sequential pointer-resolution steps. \emph{Bottom:} In the HBF, a query interacts with the full memory in parallel and yields an output $y$ after a single round of interaction.}
\label{fig:skip_vs_hbf}
\end{figure}

\section{Conclusion}
\label{sec:conclusion}
This paper introduced the \emph{Holographic Bloom Filter (HBF)} as a new indexing primitive for molecular archives, formalized as a superposition-based key--value memory built from circular convolution and correlation. Unlike pointer-chasing structures that require multiple sequential resolution steps, HBF supports one-shot associative retrieval via a single correlation operation followed by margin-based decoding.

On the algorithmic side, we specified construction and decoding procedures (including a top-$K$ margin decoder) and made the role of normalization explicit. On the theoretical side, we derived signal/noise decompositions, concentration inequalities for match and non-match score distributions, and explicit threshold/margin settings that yield exponentially small error probabilities as the dimension grows.

\paragraph{Limitations.}
Our analysis adopts standard randomness assumptions (e.g., approximately independent Rademacher codebooks and weakly correlated convolution noise) and abstracts away the biochemical details of how binding/correlation is physically realized. Bridging this abstraction gap will require experimentally grounded models of molecular operations and error modes.

\paragraph{Future work.}
Several directions follow naturally: (i) deriving tighter constants via covariance-aware extreme-value analysis; (ii) extending HBF to structured key spaces (e.g., range queries) by composing multiple memories or hierarchical encodings; and (iii) designing and validating molecular protocols that approximate convolution/correlation with predictable distortion. These steps would connect the present theoretical framework to implementable indexing layers for DNA storage systems.

\end{document}